\newcommand{\Vect}{{\rm Vec}}
\newcommand{\T}{{\cal T}}
\newcommand{\true}{\mbox{\sc True}}
\newcommand{\false}{\mbox{\sc False}}
\newcommand{\kibitz}[2]{\ifnum\Comments=1{\color{#1}{#2}}\fi}
\renewcommand{\paragraph}[1]{\medskip\noindent{\bf #1}}
\title{Collision Detection for Modular Robots -- it is easy to cause collisions and hard to avoid them} 
\titlerunning{Collision Detection for Modular Robots} 
\author{Siddharth Gupta}{University of Warwick, United Kingdom}{siddharth.gupta.1@warwick.ac.uk}{}{}
\author{Marc van Kreveld}{Utrecht University, The Netherlands}{m.j.vankreveld@uu.nl}{}{}
\author{Othon Michail}{University of Liverpool, United Kingdom}{othon.michail@liverpool.ac.uk}{}{}
\author{Andreas Padalkin}{Paderborn University, Germany}{andreas.padalkin@upb.de}{}{}
\authorrunning{Gupta, van Kreveld, Michail, Padalkin} 
\keywords{Modular robots, Collision detection, Computational Geometry, Complexity} 
\begin{document}

\maketitle

\begin{abstract}
We consider geometric collision-detection problems for modular reconfigurable robots. Assuming the nodes (modules) are connected squares on a grid, we investigate the complexity of deciding whether collisions may occur, or can be avoided, if a set of expansion and contraction operations is executed.
We study both discrete- and continuous-time models, and allow operations to be coupled into a single parallel group. Our algorithms to decide if a collision may occur run in $O(n^2\log^2 n)$ time, $O(n^2)$ time, or $O(n\log^2 n)$ time, depending on the presence and type of coupled operations, in a continuous-time model for a modular robot with $n$ nodes. To decide if collisions can be avoided, we show that a very restricted version is already NP-complete in the discrete-time model, while the same problem is polynomial in the continuous-time model. A less restricted version is NP-hard in the continuous-time model.
\end{abstract}

\section{Introduction}

Modular reconfigurable robotics and the related concept of programmable matter concern systems composed of interconnected elementary entities, called modules. 
The collection of modules can coordinate its limited communication, computation, sensing, and local actuation
to accomplish nontrivial global tasks. Local actuation of modules is enabled through a set of one or more mechanical operations that they can perform. An operation typically involves the module that applies it as well as modules in its local neighborhood. Examples of such operations are pushing, pulling, expanding, contracting, doubling, and rotating. There are systems and models using operations that can have a direct effect on the global structure. This is particularly true under the combined effect of several such operations overlapping in time.

The ability of local operations to globally affect the robotic structure is a double-edged sword. On one hand, it is a convenient form of parallelism, where global structural changes can happen faster. On the other hand, if not properly orchestrated, it could cause small violations of the structure or even complete structural failure. We, hereafter, shall call all structural violations and failures \emph{collisions}. Operations that---when applied on individual modules---can globally affect the structure, are sometimes called \emph{linear-strength operations}.

The positive effect of such operations has been studied from a theoretical point of view in a number of papers, for different underlying models and types of operations. In a series of papers, Aloupis \emph{et al.} \cite{aloupis2008reconfiguration,aloupis2009linear,aloupis2009realistic} studied a model of a robotic system known as crystalline robots \cite{rus2001crystalline}.
The 2D version of the crystalline model represents modules as squares on a 2D grid, forming a connected shape of modules attached to adjacent modules. Each individual module can expand and contract, by extending one of its faces one unit out and retracting it back at some later point. Due to modules being attached to each other, up to linear-size components can move due to a module's expansion or contraction.
In \cite{aloupis2008reconfiguration}, Aloupis \emph{et al.} gave a universal centralized reconfiguration algorithm for the crystalline model that, for any pair of connected shapes $S_I, S_F$ of the same number of modules $n$, can transform $S_I$ into $S_F$ within $O(\log n)$ parallel time steps by performing $\Theta(n\log n)$ individual operations.

In \cite{woods2013active}, Woods \emph{et al.} proposed the nubot model, motivated by the programmable self-assembly of molecules, such as DNA strands. In this model, modules represent monomers on a 2D triangular grid. The model incorporates a number of different types of operations, such as insertion, deletion, and rotation of modules.
The motion caused by operations is propagated to larger parts of the shape through its connections. Operations whose global effect would violate the rigidity of a connection are assumed to be canceled. Their main result is a distributed, asynchronous algorithm
which, starting from a singleton, can form any connected 2D shape and pattern of size $n$, within a polylogarithmic (in $n$) number of parallel time steps in expectation.

Almalki and Michail \cite{almalki2022geometric}, building on the insertion operations of \cite{woods2013active} and the growth processes on graphs by Mertzios \emph{et al.} \cite{mertzios2022complexity}, investigated what families of shapes can be grown in time polylogarithmic in their size by using only growth operations.
They did this in a 2D square grid model, under different requirements on operation couplings that must be satisfied in each time step.
Their operations were defined so that collisions can never occur. They gave centralized algorithms for growing a shape $S_F$ from a shape $S_I$ (possibly a singleton), which yield polylogarithmic parallel time-step schedules for large classes of shapes.

The amoebot model of Derakhshandeh \emph{et al.} \cite{derakhshandeh2014amoebot} --and its recent canonical extension \cite{daymude2021canonical}-- is another model in which the main operations considered are expansions and contractions of modules. The modules operate on a 2D triangular grid and reconfiguration happens through expansions of the head of modules toward empty space, followed later by contraction of their tail toward their head.
Shape formation algorithms in this model are usually designed in a way that operations are parallel but each is affecting only a local region around it and not larger parts of the shape. See \cite{daymude2019computing}, for a recent chapter covering the main algorithmic developments in this model. Recently, Feldmann \emph{et al.} \cite{DBLP:journals/jcb/FeldmannPSD22} have proposed to add linear-strength operations to the model,
but they have left the details of such an extension for future work.

The necessity to avoid collisions between large moving parts of a shape is also present in programmable matter by folding \cite{hawkes2010programmable} and the algorithmic questions related to it \cite{demaine2017origamizer}. In those models, the goal is to fold a polyhedral complex starting from a sufficiently large surface, such as a piece of paper,
without self-intersecting in the process. Coordinated motion planning is another line of research having some similarities to our work. There the goal is to reconfigure a swarm of robots to a target configuration as fast as possible while avoiding collisions of robots
\cite{schwartz1983piano,demaine2019coordinated}. Another model in which individual operations can have up to a linear effect on the shape, is the line-pushing model of Almethen \emph{et al.} \cite{almethen2020pushing}.

It is evident that most studies have restricted attention to those operations that are safe to perform in parallel. These are either linear-strength operations that cannot collide or operations that affect only the local region around them. In this paper, we explicitly pose the algorithmic question of determining when a set of operations may cause a collision and when a collision can be avoided. In particular, given a shape and a set of linear-strength operations on that shape 
we aim to give centralized algorithms that can compute a schedule of these (sets of) operations that would (i) cause a collision or (ii) avoid collisions. The former subquestion is motivated by asynchronous distributed algorithms, in which any of the possible interleavings of operations might be the one that the modules will actually realize; the latter by the need to design efficient reconfiguration algorithms that avoid collisions, instead of having collision-avoidance built in the model. To the best of our knowledge, the present is the first study to be explicitly considering this type of questions.

\begin{figure}[!hbtp]
\centering
\begin{subfigure}[htb]{0.44\textwidth}
	\centering
	\includegraphics[width=\textwidth]{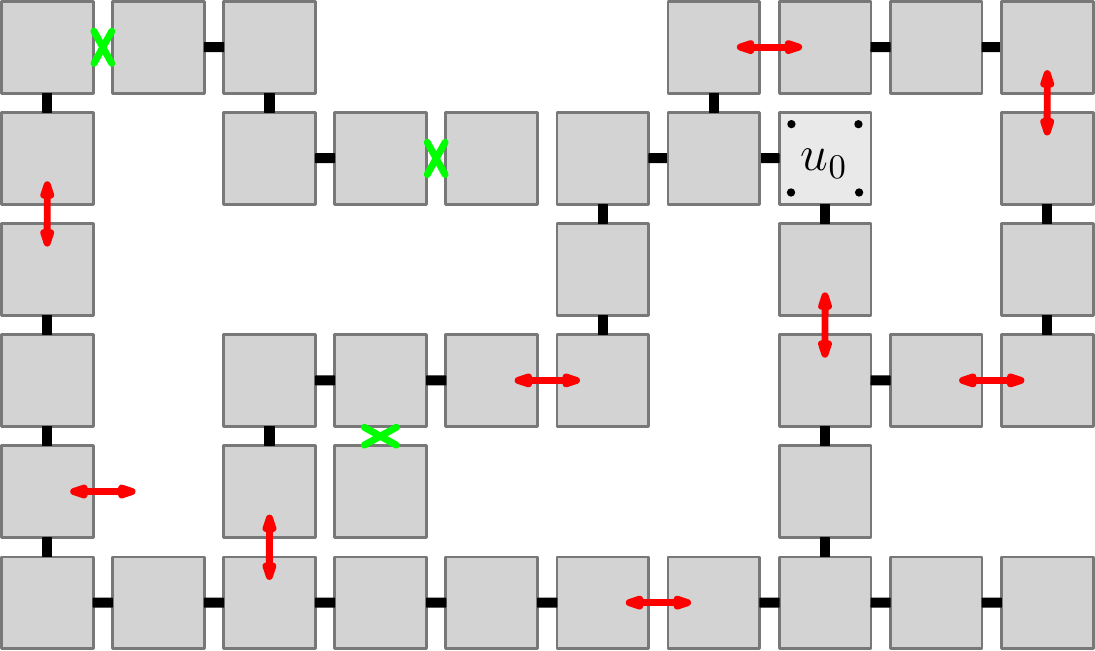}
	\label{fig:example-model-operations}
\end{subfigure}
\hfill
\begin{subfigure}[htb]{0.508\textwidth}
	\centering\vspace{0.52cm}
	\includegraphics[width=\textwidth]{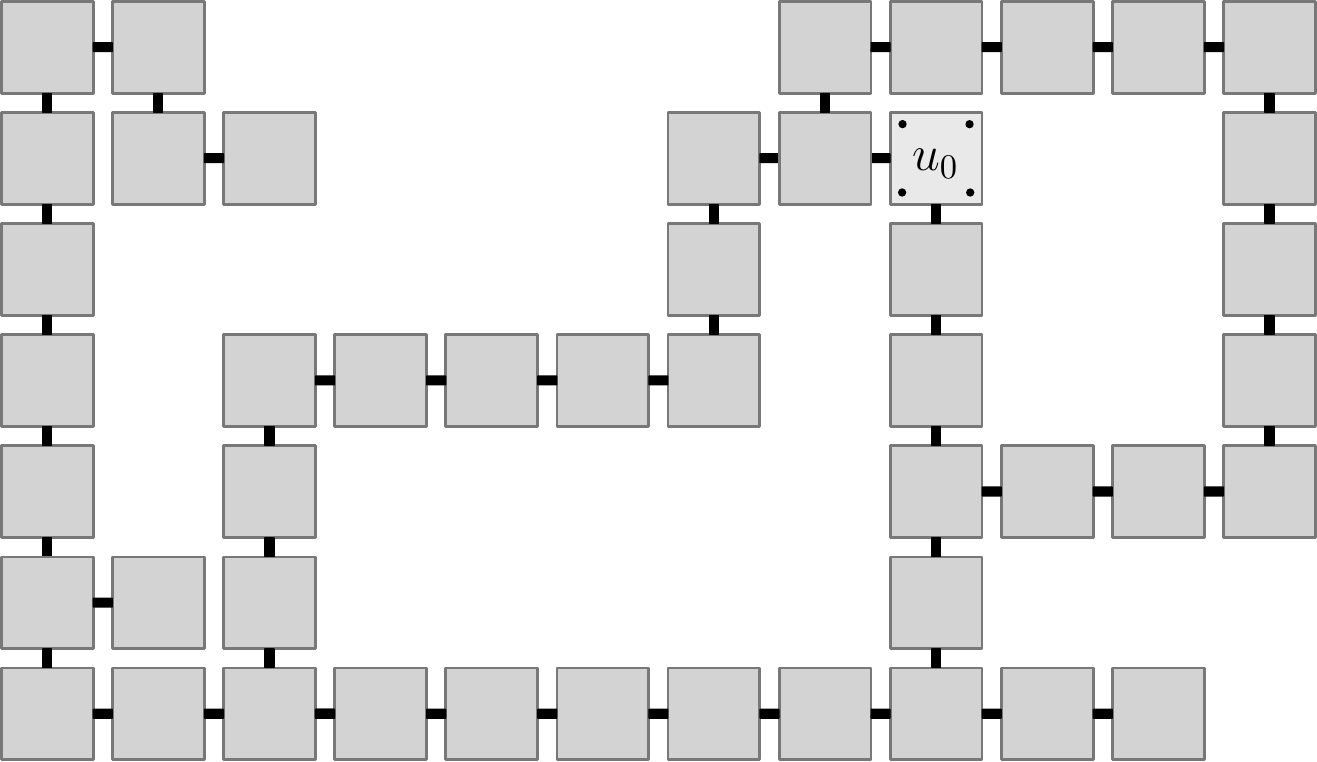}
	\label{fig:example-model-outcome}
\end{subfigure}
\caption{Applying a collision-free set of parallel operations to a shape.
	In general, a shape's edge set is shown by the black edges between the nodes, and $u_0$ denotes the anchor. Red double arrows indicate an expansion and green X's indicate a contraction. Any operation applied between two nodes should be regarded as taking place over an edge, which is not drawn to keep the illustration clean. For the same reason, nodes will be drawn slightly smaller than their actual size.}
\label{fig:example-model}
\end{figure}

\paragraph{Approach.}
We choose to study these questions in a 2D square grid model, where modules, called nodes hereafter, are unit squares occupying distinct cells of the grid. This choice makes the 2D crystalline model of Aloupis \emph{et al.} \cite{aloupis2008reconfiguration} and the growth model of Almalki and Michail \cite{almalki2022geometric} to be the closest to our model. Nodes can be connected to some of their adjacent nodes, in a way that always forms an initial connected shape $S$. We do not allow new connections to be created between the nodes. 
The operations considered are expansion and contraction.
These operations can also be viewed as a linear-strength extension of the expansions and contractions of the amoebot model \cite{derakhshandeh2014amoebot}. The input to our problems is a connected shape $S$, an assignment of operations on $S$ without multiplicities,
and a coupling partition of the operations. The coupling partition specifies which operations will be performed in parallel. See Figure~\ref{fig:example-model} for an example.

\paragraph{Results and overview.}
Section~\ref{sec:model} describes the model in which we study the collision problems. In particular, we specify the assumed input, the operations and their coupling, collisions, and time models. Then we give our problem definitions.
Section~\ref{section:collisions} first presents algorithms for the problem of detecting whether collisions may occur in the continuous-time model, where operations may start any time. When there is no specified parallelity of operations, we give an $O(n\log^2 n)$ time algorithm. When certain operations are specified to be in parallel, we present algorithms that run in $O(n^2)$ time or $O(n^2\log^2 n)$ time, depending on the degree of parallelity. We then give polynomial-time algorithms in the discrete-time model, where any operation or coupled group must complete its action before the next one begins.
Section~\ref{sec:avoiding} addresses the problem of determining whether all operations can be performed while avoiding collisions. We show that a restricted version is already NP-complete in the discrete-time model, while it is polynomial-time solvable in the continuous-time model. A somewhat less restricted version is again NP-hard in the continuous-time model. The NP-hardness results require that certain operations be specified as parallel.

\section{Model}
\label{sec:model}

We assume a 2-dimensional square grid where each cell has integer coordinates $(x,y)$. Nodes (modules) occupy cells, defining a set of occupied integer points 
such that no two nodes occupy the same cell. We represent every node $u=(u_x,u_y)$ as a square of size equal to and perfectly aligned with cell $(u_x,u_y)$ of the grid.
A \emph{shape} $S=(V,E)$ is a configuration of nodes $V$ together with their connectivity, represented by $E$. Only orthogonally adjacent nodes can be connected, but adjacent nodes are not necessarily connected.
We use $n$ to denote $|V|$ and restrict our attention to connected shapes, throughout.

\paragraph{Operations and collisions.}
In general, applying one or more \emph{operations} to a shape $S$ either causes a \emph{collision} or yields a new shape $S'$. Collisions come in two types: \emph{node collisions} and \emph{cycle collisions}. 
Given that all collisions here will be ``self-collisions'' of a connected shape, we can assume without loss of generality (abbreviated ``w.l.o.g.'' throughout) that there is an \emph{anchor} node $u_0\in V$ that is stationary and other nodes move relative to it.
We begin with the simpler case where the shape is a tree $T=(V,E)$, where cycle collisions do not exist, and then generalize to any connected shape $S$.

We start by defining single \emph{expansion}
and \emph{contraction}
operations;\footnote[1]{We believe that our definitions and techniques can be extended to alternative versions of expansion and contraction---including the case where the operations can be reversed---and to different geometries such as a triangular grid.}
see Figure~\ref{fig:example-model}. An \emph{expansion} operation is applied to a pair of adjacent integer points $uv$, where either (i) $u\in V$ and $v\notin V$, or (ii) $u,v\in V$ and $uv\in E$ holds.
The remaining case where $u,v\in V$ but $uv\notin E$ immediately gives a collision.
In case~(i), the expansion generates a node at the empty cell $v$ connected to $u$. In case~(ii), assume w.l.o.g. that $u$ is closer to $u_0$ in $T$ than $v$.
Let $T(v)$  denote the subtree of $T$ rooted at $v$. Then, the expansion generates a node between $u$ and $v$, connected to both, which translates $T(v)$ by one unit away from $u$ along the axis parallel to~$uv$. In both cases, the new node starts as a unit-length segment that widens into a unit square. 
A \emph{contraction} operation is applied to a pair of nodes $uv\in E$, $v$ being the furthest from the anchor. 
It merges $v$ with $u$ by translating $T(v)$ by one unit toward $u$ while $v$ narrows to a unit-length segment.
In both types of operations, if after $T(v)$'s translation two nodes occupy the same cell then a collision has occurred. We call this type of collision a \emph{node collision} and more generally define it as the non-empty intersection of the areas of any two nodes at any point in time. Otherwise, a new tree $T'$ 
has been obtained.

We assume that no node is involved in more than one operation.

\paragraph{Coupling.}
Let $Q$ be a set of operations to be applied \emph{in parallel} to a connected shape $S$, each operation on a distinct pair of nodes or a node and an unoccupied cell. 
We call such a set a \emph{coupling}, and the operations it contains are \emph{coupled} or \emph{parallel}. 
We assume that all operations in $Q$ are applied \emph{concurrently}, have the same \emph{constant execution speed}, and their \emph{duration} is equal to one unit of time.

Let $T=(V,E)$ be a tree and $u_0\in V$ its anchor. We set $u_0$ to be the root of $T$. We want to determine the displacement of every $v\in V\setminus\{u_0\}$ due to the parallel application of the operations in $Q$. As $u_0$ is stationary and each operation translates a subtree, only the operations on the unique $u_0v$ path contribute to $v$'s displacement.
In particular, any such operation contributes one of the unit vectors $\langle -1,0\rangle, \langle 0,-1\rangle, \langle +1,0\rangle, \langle 0,+1\rangle$ to the motion vector $\vec{v}$ of~$v$.
Moreover, for any node $u\in V$ that expands toward an empty cell, we add a new node $v$ with a corresponding unit motion vector $\vec{v}$.
We can use the set of motion vectors to determine whether the trajectories of any two nodes will collide at any point. 
Let now $S$ be any connected shape with at least one cycle and any node $u_0$ be its anchor. Then, 
a set of operations $Q$ on $S$ either causes a \emph{cycle collision} or its effect is essentially equivalent to the application of $Q$ on any spanning tree of $S$ rooted at $u_0$. 
Let $u$, $v$ be any two nodes on a cycle. If $p_1$ and $p_2$ are the two $uv$ paths of the cycle, then $\vec{v}_{p_1}=\vec{v}_{p_2}$ must hold:
the displacement vectors along the paths $p_1$ and $p_2$ are equal. 
Otherwise, we cannot maintain all nodes or edges of the cycle. Such a violation is called a \emph{cycle collision}. We call a set of operations that does not cause any node or cycle collisions  \emph{collision free}.

\paragraph{Discrete and continuous time.}
We consider two different models for the scheduling of the operations.
In the \emph{discrete-time model}, each operation or coupling starts at a different integer time and takes one unique unit of time. In other words, no two operations are active at the same time unless they are coupled.
In the \emph{continuous-time model}, we do not make the integer starting-time assumption. Operations can start at any time and their active times can overlap. Coupled operations start and finish at the same time.
Our assumption that each operation takes one unit of time to complete and has constant execution speed holds for both timing models. In the discrete-time model, only the order of the operations (individual or coupled) matters for having collisions or not. In the continuous-time model, the precise starting times of the operations matter.

\paragraph{Problem definitions.}
We now define the problems considered. Given a shape $S$ and an assignment of operations on $S$ that involve any node at most once, a \emph{coupling partition of operations on $S$} is a collection of sets $\{Q_1,Q_2,\ldots,Q_k\}$, where each $Q_i$ (possibly a singleton) denotes a subset of the operations that should be performed in parallel.
\smallskip

\noindent\textbf{{\sc Colliding Schedule}.} Given a shape $S=(V,E)$ from a given family of shapes and a coupling partition of operations $\{Q_1,Q_2,\ldots,Q_k\}$ on $S$, decide if a starting time $t_0(Q_i)\in \mathbb{R}$ for each coupled set $Q_i$ exists such that the application of the operations according to these starting times causes a collision.
\smallskip

\noindent\textbf{{\sc Collision-free Schedule}.} Given a shape $S=(V,E)$ from a given family of shapes and a coupling partition of operations $\{Q_1,Q_2,\ldots,Q_k\}$ on $S$, decide if a starting time $t_0(Q_i)\in \mathbb{R}$ for each coupled set $Q_i$ exists such that the application of the operations according to these starting times is collision free.
\smallskip

The discrete special cases of these problems, {\sc Discrete Colliding Schedule} and {\sc Discrete Collision-free Schedule}, respectively, are obtained by requiring all $t_0(Q_i)$'s to be unique integers.

\section{Algorithms for Colliding Schedule}
\label{section:collisions}

In this section, we present algorithms to decide whether a connected shape can have collisions for some schedule of the operations. We distinguish in the case without couplings and cases with various forms of coupling. We first discuss the case where $S$ is a tree; then we extend to solve {\sc Colliding Schedule} for general connected graphs. 

\subsection{Continuous and Discrete Colliding Schedule for Trees}
\label{subsection:collisions-trees}

We assume that the topology of $S$ is that of a tree.
We first present a general method to decide if collisions may occur which works when couplings may exist. The method is slightly more efficient when all couplings have constant size or each coupling is horizontal-only or vertical-only.  Then we present an improved algorithm for the case when no couplings exist.

Let $v$ be any node in $S$. We will give an algorithm to decide if $v$ can collide with any other node $w$.  
Recall that there is exactly one path between $v$ and $w$, and only the operations and couplings on this path determine whether $v$ and $w$ can collide.
Assuming that $v$ is stationary, coupled operations on this path allow $w$ to translate over a vector $\langle i,j\rangle$ with integer coordinates. By letting all nodes take the role of $v$, we get a complete algorithm.

\begin{figure}[tb]
\centering\includegraphics[width=0.9\textwidth]{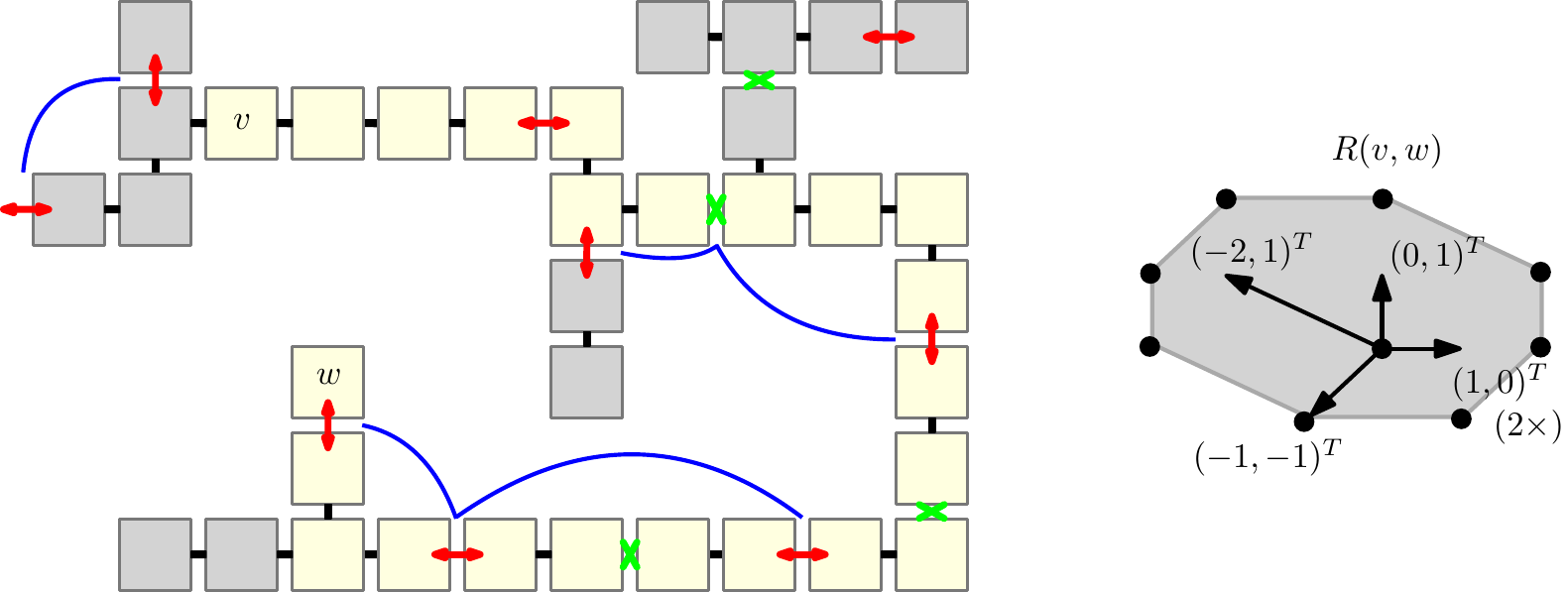}
\caption{A shape $S$ with operations and $v$ and $w$ indicated. Only operations with both ends on the $vw$ path influence how $w$ can move with respect to $v$. The couplings are shown with blue arcs. There are five vectors in total implied by the path, of which two are the same. The corresponding region $R(v,w)$ for $w$ with respect to $v$ is the zonotope shown on the right. The $(2\times)$ label in the zonotope represents the fact that the vector $(1,0)^T$ is implied twice by the path.}
\label{fig:path}
\end{figure}

Starting at $v$, we perform a traversal of $S$ and visit all other nodes. Whenever we visit and treat a node $w'$, we have already visited the nodes on the path from $v$ to $w'$, in particular the neighbor $w$ of $w'$ on this path. The idea of the algorithm is to maintain the vectors made by the operations and couplings, with multiplicity, when traversing $S$ from node to node.

Let $\Vect(v,w)$ be the multi-set of vectors describing the independent operations and couplings between $v$ and $w$ (see Figure~\ref{fig:path}), and let $\T(v,w)$ be the tree storing $\Vect(v,w)$ sorted by angle in the leaves. We augment $\T(v,w)$ by storing at every internal and leaf node the sum vector of all vectors (with multiplicities) in the leaves below it. 

One extra traversal step in $S$, going from $w$ to $w'$, means we must update $\T(v,w)$ to obtain $\T(v,w')$:
$(i)$ If the edge between $w$ and $w'$ is not an operation, then $\T(v,w')=\T(v,w)$. $(ii)$ If the edge is an operation without coupling on the path so far, we get one new vector from $\langle -1,0\rangle, \langle 0,-1\rangle, \langle +1,0\rangle, \langle 0,+1\rangle$, which we insert into $\T(v,w)$ (possibly by increasing a multiplicity count instead of a leaf insertion) to create $\T(v,w')$.
$(ii)$ If the edge is an operation coupled with one or more operations between $v$ and $w$, then a vector is modified by adding an extra operation to it. The corresponding vector changes by $\pm 1$ in the $x$- or $y$-direction. The vector can become the zero vector, in which case we remove it. In the other case, we perform a deletion and an insertion to change the vector.

The tree $\T(v,w)$ is an implicit representation of the reachable region $R(v,w)$ of a node $w$ with respect to $v$. In the continuous-time model,
the reachable region is a convex region that contains the node $w$ itself. It is the \emph{zonotope} of the vectors $\Vect(v,w)$, and has twice as many edges as there are unique directions of vectors (we note that $R(v,w)$ is not convex if there are couplings and we use the discrete-time model). 
With a continuous-time model, $w$ can always intersect any cell inside by starting the operations at suitable times.

The next lemma shows that we can query the search tree $\T(v,w)$ to find an extreme point of $R(v,w)$ in a given query direction efficiently, and test for intersection of nodes $v$ and $w$.

\begin{lemma}
\label{lem:tree}
We can find an extreme point of $R(v,w)$ in a given query direction $\vec{\rho}$ in $O(\log n)$ time, and test whether $v$ and $w$ can intersect in $O(\log^2 n)$ time.
\end{lemma}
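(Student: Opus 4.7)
The plan is to handle the two parts of the lemma separately, both exploiting the structural fact that $R(v,w)$ is a translate of the zonotope $Z$ generated by the vectors in $\Vect(v,w)$ stored sorted by angle in $\T(v,w)$.

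For the extreme-point query, I would use that the vertex of $Z$ that maximises $\vec{\rho}\cdot z$ equals $\sum_{\vec{e}\in \Vect(v,w),\,\vec{e}\cdot\vec{\rho}>0}\vec{e}$, since each generator contributes to the maximum exactly when its dot product with $\vec{\rho}$ is positive. The set of generators with $\vec{e}\cdot\vec{\rho}>0$ is those whose angle lies in the open semicircle centred on the angle of $\vec{\rho}$, which is a contiguous arc (possibly wrapping around) in the cyclic angular order kept by $\T(v,w)$. Two binary searches locate the endpoints of this arc in $O(\log n)$ time, and the augmented subtree sums let us combine the generators in the arc into a single vector in another $O(\log n)$ time; translating by $w$'s base position produces the requested extreme point of $R(v,w)$.

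For the intersection test, I first reduce the collision condition to a convex-polygon overlap test: two axis-aligned unit squares meet iff their centres differ by less than $1$ in each coordinate, so nodes $v$ and $w$ collide iff the convex region $R(v,w)$ intersects the open axis-aligned square $U$ of side $2$ centred on $v$. I then apply the separating-axis theorem: $R(v,w)\cap U=\emptyset$ iff some line separates them, and such a separator can be taken perpendicular to an edge of one of the polygons. The four axis directions perpendicular to edges of $U$ are handled by four extreme-point queries on $R(v,w)$, each compared against the trivial extreme of $U$ in that direction, costing $O(\log n)$ in total. For the $\Theta(n)$ directions perpendicular to edges of $R(v,w)$, which already appear in angular order in $\T(v,w)$, I would binary-search the angular order for a separator; each step invokes one extreme-point query ($O(\log n)$) and compares it against the corresponding extreme of $U$ ($O(1)$), for $O(\log^2 n)$ overall.

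The main obstacle is making the binary search over edges of $R(v,w)$ rigorous. As the candidate perpendicular direction $\vec{\rho}$ sweeps the circle, the signed gap between the supporting lines of $R(v,w)$ and $U$ is piecewise-linear in $\vec{\rho}$ and, crucially, unimodal on each of a constant number of sub-arcs that are delimited by directions in which the supporting vertex of $R(v,w)$ or of $U$ changes ``side'' (these boundary directions are exactly the axis directions and the directions to the corners of $U$, all determined by the axis-aligned extreme-point queries from the first step). Restricting the binary search to each such sub-arc separately, and concluding intersection if every sub-arc fails to produce a separator, gives the claimed $O(\log^2 n)$ time.
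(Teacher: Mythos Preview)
Your treatment of the extreme-point query is essentially the paper's argument: both identify the extreme vertex as the sum of the generators with positive inner product with $\vec{\rho}$, locate the relevant angular range by two root-to-leaf searches, and read off the answer from the subtree sums.

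For the intersection test the two approaches diverge. The paper reduces to a \emph{point-in-convex-polygon} query: it absorbs the square $U$ into the zonotope by inserting the four unit vectors $(\pm 1,0)^T,(0,\pm 1)^T$ as additional generators (so $R(v,w)$ is replaced by $R(v,w)\oplus(-U)$), and then tests whether the single point $c_v$ lies in this enlarged zonotope. That test is a clean binary search: with $c_w$ as an interior reference point, the boundary edges of the zonotope appear in monotone angular order, so one searches for the edge whose angular wedge from $c_w$ contains the direction $\overrightarrow{c_wc_v}$ and then checks on which side of that edge $c_v$ lies. Each probe costs one extreme-point query, giving $O(\log^2 n)$.

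Your separating-axis route is in principle equivalent, but the justification you give for the binary search has a real gap. The ``signed gap'' you are maximising is $g(\vec{\rho})=-h_{R-U}(\vec{\rho})$, the negated support function of the Minkowski difference. You assert that $g$ is unimodal on each of constantly many sub-arcs delimited by ``directions in which the supporting vertex of $R(v,w)$ or of $U$ changes side''. But the supporting vertex of $R(v,w)$ changes $\Theta(n)$ times, not $O(1)$ times, so fixing only the $U$-vertex does not leave you with a unimodal function: on a sub-arc where the $U$-vertex is fixed at $u^\ast$, you are still maximising $-h_{R-u^\ast}(\vec{\rho})$, and the support function of a convex polygon (even a zonotope) restricted to an arc of the unit circle is in general not unimodal. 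Without unimodality the binary search is unfounded; the description of the sub-arc boundaries (``axis directions and the directions to the corners of $U$'') does not supply the missing structure either. The simplest repair is exactly what the paper does: dilate $R(v,w)$ by $-U$ once (four extra generators in $\T(v,w)$) and run the standard point-in-convex-polygon binary search from the interior point $c_w$, which is monotone by construction.
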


\begin{proof}
Let $\vec{\rho}$ be any vector; we will use its direction only. We explain how to use $\T(v,w)$ to find a vertex of $R(v,w)$
that is extreme in the direction of $\vec{\rho}$. For such a vertex $u$, a line through $u$ normal to $\vec{\rho}$ will have the interior of $R(v,w)$ strictly to one side. To find this vertex, we add up all vectors that have a positive dot product with $\vec{\rho}$. The augmentation of $\T(v,w)$ allows us to find this vertex as the addition of $O(\log n)$ vectors stored at internal or leaf nodes. We follow the search paths down $\T(v,w)$ using the two directions normal to $\vec{\rho}$, being $\vec{\rho}-90^\circ$ and $\vec{\rho}+90^\circ$. Suppose we end in leaves $\nu$ and $\nu'$, respectively. For all highest nodes between the paths to $\nu$ and $\nu'$, we add up the summed vectors stored in those nodes.
(Note that we may need the nodes left of the left path and right of the right path, since the directions are a cyclic order. We ignore this fully analogous case.)
If appropriate, we also add the vectors in one or both of $\nu$ and $\nu'$ (with their multiplicities). The sum gives the extreme vertex $u$ of $R(v,w)$ in direction $\vec{\rho}$ in $O(\log n)$ time.
When we turn $\vec{\rho}$ a little, $u$ will remain the extreme vertex, until the direction changes enough that the set of vectors with a
positive dot product with $\vec{\rho}$ changes. This happens when there is a vector in direction $\vec{\rho}-90^\circ$ or $\vec{\rho}+90^\circ$ in $\Vect(v,w)$. Such a vector is stored in $\nu$ or $\nu'$ or an adjacent leaf. By examining them, we can determine an adjacent extreme vertex $u'$ of $R(v,w)$, and hence an edge $uu'$ that bounds $R(v,w)$ in $O(\log n)$ time as well.

\begin{figure}[htb]
\centering\includegraphics{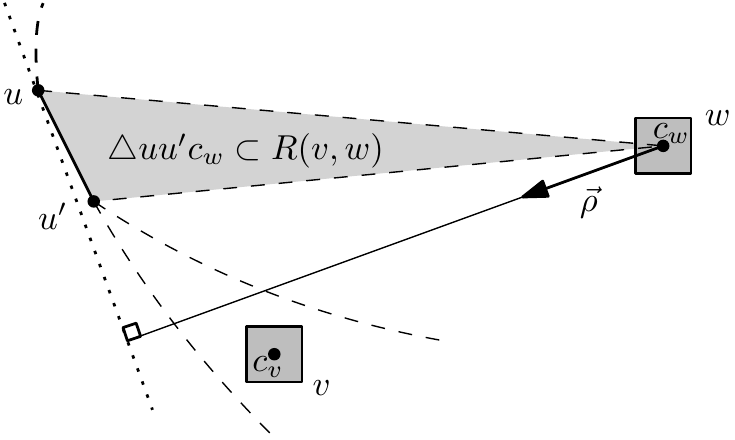}
\caption{Illustration of case $(iii)$, where $u$ is an extreme vertex of $R(v,w)$ in direction $\vec{\rho}$, and the locations of $u$ and $u'$ do not tell yet whether $c\in R(v,w)$ (the dashed curves show possible continuations of $R(v,w)$ counterclockwise). Here we must search more counterclockwise by choosing another $\vec{\rho}$ positively rotated with respect to the current one.}
\label{fig:reachable-region}
\end{figure}

We will use the query for an extreme vertex of $R(v,w)$ in a direction $\vec{\rho}$ to 
determine whether $v$ intersects the region that $w$ can reach. 
We first show how to decide whether the center point $c_v$ of node $v$ lies inside $R(v,w)$, the region that the center $c_w$ of node $w$ can reach; then we adapt this to the cells themselves. We decide if $c_v\in R(v,w)$ by finding the edge $uu'$ of $R(v,w)$ for which $\overrightarrow{c_wc_v}$ is wedged between $\overrightarrow{c_wu}$ and $\overrightarrow{c_wu'}$, see Figure~\ref{fig:reachable-region}. Once we have this edge, we can test whether $c_v \in \triangle uu'c_w$, in which case we found a collision, and otherwise, we did not.

To find the edge $uu'$ of $R(v,w)$, we perform a binary search, starting with the angular interval for $\vec{\rho}$ with counterclockwise boundary $\overrightarrow{c_wc_v} - 90^\circ$ and clockwise boundary $\overrightarrow{c_wc_v} + 90^\circ$. 
The two boundaries lead to leaves in $\T(v,w)$. We consider the leaf $\mu$ halfway in between to perform one binary search step. This leaf $\mu$ corresponds to a direction $\vec{\rho}$ from $w$, and this direction gives a pair of extreme vertices $u$ and $u'$ of $R(v,w)$.
There are three cases, see Figure~\ref{fig:reachable-region}. 
$(i)$ If $c_v$ is inside the triangle $\triangle uu'c_w$ then we know that $c_v$ is inside $R(v,w)$ and we found a possible collision.
Otherwise, $(ii)$ if $c_v$ is on the other side of the line through $u$ and $u'$ than $c_w$, then $c_v$ cannot be inside $R(v,w)$ by convexity, and we can stop.
Otherwise, $(iii)$ we must inspect directions more clockwise or more counterclockwise from $\vec{\rho}$.

With this binary search, we need at most $O(\log n)$ steps of finding extreme points until we are in a situation where $\overrightarrow{c_wc_v}$ lies in between $\overrightarrow{c_wu}$ and $\overrightarrow{c_wu'}$, in which case we can decide if $c$ in $R(v,w)$ (by cases $(i)$ and $(ii)$).

So far we have ignored the fact that $v$ and $w$ are squares and not points. It is easy to integrate this into the algorithm by dilating $R(v,w)$ with a fixed amount. A simple way of doing this is by using the four vectors $(\pm 1,0)^T$ and $(0,\pm 1)^T$ in the tree $\T(v,w)$ as extra vectors, effectively expanding the reachable region. Then we can keep on working with the centers $c_v$ and $c_w$ rather than the squares $v$ and~$w$.
\end{proof}

For the complete algorithm, we have $n$ starting nodes $v$, and for each, we traverse $S$. Treating any encountered node $w$ takes $O(\log^2 n)$ time for the search with $v$ in the region of $w$, and $O(\log n)$ time for updating $\T(v,w)$ to prepare for the next node $w'$ in the traversal of~$S$. Hence the overall algorithm runs in $O(n^2\log^2 n)$ time.
The analysis of the algorithm for the case where all couplings have size $O(1)$ or all couplings are purely horizontal or purely vertical is easy: The region $R(v,w)$ and the tree $\T(v,w)$ have constant complexity, so the logarithmic factors vanish. The running times improve to $O(n^2)$.
\medskip

When there are no couplings at all, we take a different approach that leads to a more efficient algorithm.
Choose an anchor node $v$ of the shape $S$ whose removal creates subtrees (subshapes) each of which has size at most half of the original size. Such a central node always exists, and we obtain at most four subtrees.
Observe that any operation in one subtree can influence the position of nodes in that subtree only, and not the positions of nodes in any other subtree. 

We will determine for each subtree separately where all of its nodes can be. If these regions overlap for any two subtrees, then we have two nodes in different subtrees of $S$ that can cause a collision and we have answered our question. 
We have also answered our question when one of the subtrees has a node that can occupy the location of the anchor $v$. Otherwise, we answer our question recursively in each subtree (note that the intersection of two regions of nodes in the same subtree does not tell us anything). In recursive steps, anchors are different and also the regions are different.

For each node $w_1$ in a subtree $S_1$, consider all operations on the path between $v$ and $w_1$. These are horizontal and vertical contractions and expansions. These together specify all locations where $w_1$ can possibly be. These locations necessarily form a rectangle $R(v,w_1)$ because there is no coupling. We do this for all nodes in $S_1$, giving a set ${\cal R}$ of ``red'' rectangles, and we do the same for all nodes in another subtree $S_2$, giving a set ${\cal B}$ of ``blue'' rectangles. We can compute ${\cal R}$ and ${\cal B}$ in linear time by tree traversal from the anchor $v$ and maintaining the operations on the path from~$v$.
With a standard plane sweep and segment trees, we can decide if any rectangle in ${\cal R}$ intersects any rectangle in ${\cal B}$ in $O(n \log n)$ time~\cite{bcko-cgaa-08}. Due to recursion on the subtrees of $S$ we spend $O(n\log^2 n)$ time overall.

\begin{theorem}
Let $S$ be a shape consisting of $n$ unit square nodes with operations defined on the edges between adjacent nodes, and let the adjacency structure of $S$ be a single tree. Then we can solve {\sc Colliding Schedule}
\begin{itemize}
\item in $O(n^2\log^2 n)$ time if couplings exist;
\item in $O(n^2)$ time if each coupling has constant size, or is horizontal-only or vertical-only;
\item in $O(n\log^2 n)$ time if the operations are not coupled.
\end{itemize}
\end{theorem}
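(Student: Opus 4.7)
The plan is to reduce the theorem to \cref{lem:tree} combined with an efficient enumeration of all pairs. For each choice of reference node $v$, I would perform a traversal of $S$ starting at $v$ and maintain the augmented tree $\T(v,w)$ incrementally as $w$ walks from one neighbor to the next. As described in the discussion preceding the lemma, each traversal step modifies $\T(v,w)$ by at most one vector insertion, deletion, or leaf modification, and the augmented subtree-sums can be refreshed along the search path in $O(\log n)$ time. For each new $w$ visited, I apply the $O(\log^2 n)$ collision test from \cref{lem:tree}. Summing over the $n$ choices of $v$, the total work is $O(n^2 \log^2 n)$, giving the first bound.

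For the second bound, I would argue that when every coupling has constant size, or when all couplings are purely horizontal (respectively purely vertical), $\Vect(v,w)$ spans only $O(1)$ distinct directions throughout the traversal, so the zonotope $R(v,w)$ has constant combinatorial complexity. Consequently $\T(v,w)$ collapses to a constant-size structure, the extreme-point query and the collision test both drop to $O(1)$ time, and the incremental update per edge also becomes $O(1)$. The double loop then costs $O(n^2)$ in total.

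For the uncoupled case the region $R(v,w)$ degenerates into an axis-aligned rectangle, which invites a centroid-based divide-and-conquer. I would pick a centroid anchor $v$ so that each of the (at most four) resulting subtrees has size at most $n/2$. Any collision either places a subtree-node on the cell of $v$ itself, or places two nodes from distinct subtrees on a common cell, or involves two nodes of the same subtree. I would compute every rectangle $R(v,w)$ in linear time by a single traversal of each subtree, then for each pair of subtrees run the classical segment-tree plane sweep to decide rectangle-intersection in $O(n \log n)$ time, and finally test whether any rectangle covers $v$. Intra-subtree collisions are deferred to the recursive call. The recurrence $T(n) \le \sum_i T(n_i) + O(n \log n)$ with $n_i \le n/2$ solves to $O(n \log^2 n)$.

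The main obstacle I anticipate is the correct incremental maintenance of $\T(v,w)$ along the traversal when couplings are present: traversing a single edge in $S$ may introduce a fresh unit vector, modify an existing coupled vector so that it cancels to zero (requiring a deletion), or change it to a different nonzero direction (requiring a deletion followed by a reinsertion), and every such change must propagate cleanly through the augmented sum-vector annotations without breaking the sorted-by-angle invariant. A secondary subtlety in the uncoupled case is ensuring the divide step explicitly handles the ``subtree-node meets the removed centroid'' situation, which is not captured by any pair-of-subtrees rectangle test and will also not be caught by recursion once $v$ has been removed.
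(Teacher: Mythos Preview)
Your proposal is correct and follows essentially the same approach as the paper: the same per-$v$ traversal with incremental maintenance of $\T(v,w)$ and the \cref{lem:tree} query for the first two bounds, and the same centroid decomposition with red--blue rectangle intersection for the uncoupled case. The two subtleties you flag (vector cancellation/reinsertion during updates, and the explicit centroid-hit test) are exactly the ones the paper points out as well.
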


We can also solve {\sc Discrete Colliding Schedule} in polynomial time. The algorithm without coupling is still correct, but with coupling we need a different approach.
We now present an algorithm to decide whether $w$ can be moved with respect to another node $v$ such that they collide in the discrete-time model and when there are couplings. Recall that the couplings on the path between $v$ and $w$ imply a set of vectors $\Vect(v,w)$. Unlike the continuous-time case, the reachable region $R(v,w)$ of $w$ is no longer convex when there are couplings.

Our approach is an extension of the classical dynamic-programming method to solve subset sum in the case where the numbers are integers of bounded value.
There are two main differences. Firstly, we do not have integers, but pairs of integers (vectors). Secondly, we need to handle the case where movement of the node $w$ collides with $v$ \emph{during} the movement.

We put the vectors of $\Vect(v,w)$ into an arbitrary order $\tau_1,\ldots,\tau_m$, and
for all $1\leq j\leq m$, we leave out vector $\tau_j$ and rename the others to become $\nu_1,\ldots,\nu_{m-1}$. We will show how to generate all cell locations that $w$ can reach using all subsets of these $\nu$ vectors, and then test if moving $w$ over vector $\tau_j$ causes a collision with $v$. Since every colliding schedule must end with some vector that causes the collision, we cover all possibilities.

We define a state $(i,x,y)$ as a Boolean that should be set to $\true$ by the algorithm if and only if some subset of $\nu_1,\ldots,\nu_i$ gives a sum of the first part that equals $x$ and sum of the second part that equals $y$.
We store the state in a 3-dimensional table which has size $O(n^3)$ as $m\leq n$ and the maximum values of $x$ and $y$ are~$n$.

In dynamic-programming fashion, we define $(i+1,x,y)$ based on entries with smaller first index.
We initialize the table with all entries set to $\false$. Then we set $(0,0,0):=\true$.
Let $\nu_{i+1}= \langle \nu^x_{i+1}, \nu^y_{i+1} \rangle$.

\begin{itemize}
	\item 
	If $(i,x,y)= \true$, then we set $(i+1,x,y):= \true$.
	\item
	If $(i,x,y)=\true$, then
	we set $(i+1,\,x+\nu^x_{i+1},\, y+\nu^y_{i+1}):=\true$.
\end{itemize}

When the table is filled, we know all possible starting locations of $w$ before the final move over the vector $\tau_j$. We test them all for a collision with $v$. In total, this algorithm takes $O(n^4)$ time.
This basic algorithm can be improved with the following observation:

\begin{lemma}
	The set $\Vect(v,w)$ has at most $O(n^{2/3})$ unique vectors.
\end{lemma}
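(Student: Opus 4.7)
The plan is to charge each unique vector in $\Vect(v,w)$ to a distinct set of operations on the $vw$-path, and then use a counting argument on integer vectors with small $L_1$-norm.

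First I would observe that the multi-set $\Vect(v,w)$ has one entry per coupling that contains at least one operation on the $vw$-path, and that the vector contributed by such a coupling is the sum of unit vectors (in $\{\langle \pm 1,0\rangle, \langle 0,\pm 1\rangle\}$) coming from its operations on the path. Hence for any coupling contributing a vector $\vec{u} = \langle a,b\rangle$ to $\Vect(v,w)$, the coupling must use at least $|a|+|b|$ path-operations (direction cancellations can only increase this count). Since distinct couplings use disjoint operations, summing over the $k$ unique non-zero vectors $\vec{u}_1,\dots,\vec{u}_k$ in $\Vect(v,w)$ gives
\[
\sum_{i=1}^{k} \bigl(|a_i|+|b_i|\bigr) \;\le\; (\text{number of operations on the } vw\text{-path}) \;\le\; n-1.
\]
The zero vector is explicitly removed by the update rule and so can be ignored.

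Next I would bound $k$ by maximizing the number of distinct non-zero integer vectors whose $L_1$-norms sum to at most $n$. For each radius $r \ge 1$ there are exactly $4r$ distinct integer vectors with $L_1$-norm equal to $r$. To make $k$ as large as possible one should pick the vectors of smallest norm first, so if all vectors of norm $\le R$ are used, then
\[
k \;=\; \sum_{r=1}^{R} 4r \;=\; 2R(R+1) \;=\; \Theta(R^2),
\qquad
\sum_{i=1}^{k} |\vec{u}_i|_1 \;=\; \sum_{r=1}^{R} 4r^2 \;=\; \Theta(R^3).
\]
Combining the two relations yields $\sum_i |\vec{u}_i|_1 = \Theta(k^{3/2})$, so the constraint $\sum_i |\vec{u}_i|_1 \le n-1$ forces $k = O(n^{2/3})$.

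The argument is essentially a direct counting bound, so there is no real technical obstacle; the only thing to be careful about is justifying the inequality $|a_i|+|b_i| \le (\text{path operations in coupling } i)$ (triangle inequality applied to the unit vectors in the coupling) and noting that different couplings contribute operation-disjoint contributions to the $vw$-path, which is why the bounds sum cleanly.
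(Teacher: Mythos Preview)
Your proof is correct and follows essentially the same approach as the paper: both charge each unique vector $\langle a,b\rangle$ to at least $|a|+|b|$ disjoint path-operations of the coupling that produced it. The paper's counting step is slightly coarser---it splits vectors according to whether some coordinate exceeds $n^{1/3}$ (at most $n^{2/3}$ such vectors since each costs more than $n^{1/3}$, and at most $4n^{2/3}$ vectors in the remaining $L_\infty$-box)---whereas you sum $L_1$-norms over concentric $L_1$-balls; both are standard and yield the same bound.
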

\begin{proof}
	Any vector $\langle i,j\rangle$ ``costs'' at least $|i|+|j|$ nodes on the path between $v$ and $w$ in a single coupling. To make unique vectors, we need larger and larger couplings. It is clear that there cannot be more than $n^{2/3}$ vectors where one dimension is larger than $n^{1/3}$ in absolute value. At the same time, the number of vectors where both dimensions are at most $n^{1/3}$ in absolute value is $4n^{2/3}$.
\end{proof}

Since vector addition is commutative, we can assume that all vectors with the same direction are consecutive. This means that we can exclude not just one vector $\tau_j$, but all equal vectors. At the end we test collision with the summed vector in that direction. Since there are only $O(n^{2/3})$ unique vectors, we run the dynamic programming method only $O(n^{2/3})$ times, giving an running time of $O(n^{11/3})$ to test if $v$ and $w$ can intersect.

Moreover, when couplings have size $O(1)$ or we have horizontal-only and vertical-only couplings, we have only $O(1)$ directions of vectors, leading to $O(n^3)$  running time.
These running times are for each pair of nodes $v,w$, so overall we get:

\begin{theorem}
	Let $S$ be a shape consisting of $n$ unit square nodes with operations defined on the edges between adjacent nodes, and let the adjacency structure of $S$ be a single tree. Then we can solve {\sc Discrete Colliding Schedule}
	\begin{itemize}
		\item in $O(n^{17/3})$ time if couplings exist;
		\item in $O(n^5)$ time if each coupling has constant size, or is horizontal-only or vertical-only;
		\item in $O(n\log^2 n)$ time if the operations are not coupled.
	\end{itemize}
\end{theorem}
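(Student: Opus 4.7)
The plan is to compile the three algorithms described above into bounds on the total running time, handling the uncoupled and coupled cases separately.

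For the uncoupled case, we observe that every element of $\Vect(v,w)$ is a unit vector, so the zonotope $R(v,w)$ collapses to the same axis-aligned rectangle built by the continuous-time algorithm of Section~\ref{subsection:collisions-trees}. Every integer lattice point inside this rectangle is realisable as a subset sum of $\Vect(v,w)$, hence as the position of $w$ at some integer time in a valid discrete schedule, and every intermediate motion is a unit segment whose endpoints are already such lattice points. Therefore a discrete-time collision exists for the pair $(v,w)$ if and only if the continuous-time algorithm reports one, and the $O(n\log^2 n)$ bound of Section~\ref{subsection:collisions-trees} carries over verbatim.

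For the coupled case, we run the dynamic program described above on each pair $(v,w)$ of nodes independently. Fixing such a pair with $\Vect(v,w)=\{\tau_1,\ldots,\tau_m\}$, we enumerate candidate ``last colliding steps'' $\tau_j$, delete $\tau_j$ from the multiset, and fill a Boolean table indexed by states $(i,x,y)\in[0,m-1]\times[-n,n]\times[-n,n]$ in $O(n^3)$ time per instance; for every reached offset $(x,y)$ we test in $O(1)$ whether the unit-time sweep of $w$'s square by $\tau_j$ intersects the square for $v$. Invoking the unique-vector lemma, which bounds the number of distinct directions in $\Vect(v,w)$ by $O(n^{2/3})$, and observing by commutativity that all copies of the critical direction may be consolidated at the end of the schedule into a single compound translation, we need only $O(n^{2/3})$ DP instances per pair. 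This yields $O(n^{11/3})$ per pair and $O(n^{17/3})$ over the $O(n^2)$ pairs. Under constant-size couplings or purely horizontal / purely vertical couplings, only $O(1)$ distinct directions arise, so the per-pair cost drops to $O(n^3)$ and the total to $O(n^5)$.

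The main obstacle we anticipate is the correctness of the consolidation step when the compound vector $k\cdot d$ has $L_1$-length larger than one, since collisions may now occur strictly during the compound sweep rather than only at its endpoints. We plan to resolve this by a direct geometric argument: the swept region of $k$ consecutive unit-duration translations of $w$'s square in direction $d$ and of a single unit-duration translation by $k\cdot d$ is the same axis-aligned rectangle, so a collision in the original schedule occurs if and only if this rectangle intersects the square for $v$. This intersection test runs in $O(1)$ per reached DP state, so the claimed running times are preserved.
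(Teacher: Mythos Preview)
Your proposal follows the paper's argument essentially verbatim: the uncoupled case reduces to the continuous-time algorithm because the reachable region is the same axis-aligned rectangle, and the coupled cases use the subset-sum dynamic program together with the $O(n^{2/3})$ unique-vectors lemma to iterate over candidate final directions rather than over individual vectors. Your anticipation and handling of the ``compound last step'' issue is exactly what the paper needs but states only tersely.

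One small slip to fix: in the general coupled setting the direction $d$ need not be axis-parallel, so the swept region of $w$'s unit square under a translation by $k\cdot d$ is the Minkowski sum of a unit square with the segment $[0,k\cdot d]$, which is a (possibly degenerate) hexagon, not an axis-aligned rectangle. Your equality-of-swept-regions argument and the $O(1)$ intersection test still go through for this hexagon, so the running times are unaffected; only the word ``rectangle'' needs replacing.
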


For the last result we can use the continuous-time model result, since the reachable region $R(v,w)$ is the same in the discrete and continuous cases when no couplings exist.

Finally, we note that when there is just one single coupling of all operations, then {\sc Colliding Schedule} and {\sc Collision-free Schedule} are the same, as are the discrete-time and continuous-time models. We can model the problem by adding a third dimension. Each node is at its starting location in the plane $z=0$ and at its destination---given some anchor---in the plane $z=1$. We make a prism with square horizontal cross-section for each node by connecting the starting and destination squares. The question whether a node collision occurs then boils down to determining whether a proper intersection occurs among the prisms. We can solve this problem, for any constant $\epsilon>0$, in $O(n^{3/2+\epsilon})$ time by ray shooting along all of the prism edges in the faces of the other prisms~\cite{ezra2022ray}.

\subsection{Colliding Schedule for General Graphs}
\label{section:general-shapes}

Recall that a cycle collision occurs between two nodes $u$ and $v$ on a cycle if the two paths between them do not agree on where one node moves to with respect to the other node. For any cycle that contains an individual operation in that cycle, it is easy to cause a cycle collision by performing that operation only. More generally, the horizontal operations must be in couplings that compensate for their effect on the cycle, and the same is true for the vertical operations. If this is not true for any cycle, then a cycle collision can be created. The possibility of a cycle collision is not dependent on the two nodes whose relative movement we check. Any pair of nodes on the cycle will give the same result.

We can show that if the couplings are such that in every \emph{elementary cycle} of the graph, there is no cycle collision, then this is true for the whole graph. So we need to check the $O(n)$ elementary cycles only. We make this more precise.

Let $S=(V,E)$ be the connectivity graph of a connected shape. An \emph{elementary cycle} is the outer boundary of a bounded face of a biconnected component. We obtain them by removing all nodes that are cut nodes of the graph. What remains are zero or more components each of which is biconnected. Each bounded face of the embedding is enclosed by a simple cycle of nodes. These are the elementary cycles. 

\begin{lemma}
\label{lem:cycles}
For any connected shape $S$, the elementary cycles cannot cause a cycle collision if and only if all cycles cannot cause a cycle collision.
\end{lemma}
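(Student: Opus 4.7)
The backward direction is trivial since every elementary cycle is itself a cycle. For the forward direction, the plan is to recast the ``no cycle collision'' condition as a linear condition on the integer cycle space of $S$ (viewed as a plane graph, since $S$ is embedded in the integer grid) and then invoke the standard fact that in a $2$-connected plane graph the bounded face boundaries generate the cycle space.

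More concretely, I would first fix a coupling $Q$ under consideration and orient every edge of $S$ arbitrarily. For each edge $e$, let $d_Q(e)\in\{-1,0,+1\}^2$ be the displacement that the operation on $e$ within $Q$ contributes across $e$ in its reference orientation (and $\vec 0$ if $e$ carries no operation of $Q$). For any oriented simple cycle $C$, define
\[
\sigma_Q(C) \;:=\; \sum_{e\in C}\epsilon_C(e)\,d_Q(e),
\]
where $\epsilon_C(e)\in\{+1,-1\}$ records whether $C$ traverses $e$ along or against its reference orientation. A direct check shows that for any two nodes $u,v$ of $C$, the difference $\vec v_{p_1}-\vec v_{p_2}$ between the two $uv$-path displacements equals $\pm\sigma_Q(C)$; hence $C$ avoids a cycle collision under $Q$ iff $\sigma_Q(C)=\vec 0$. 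Crucially, $\sigma_Q$ extends linearly to the integer $1$-cycle space $Z_1(S)$.

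Next I would reduce to the biconnected case. Any simple cycle $C$ of $S$ must lie entirely inside a single biconnected component $B$, because a simple cycle cannot pass through a cut node in a way that leaves and re-enters a block. Since $S$ is embedded in the grid, $B$ is a $2$-connected plane graph, and the elementary cycles of $S$ lying in $B$ are exactly the boundaries of its bounded faces. By the classical plane-graph cycle-space theorem (provable by induction on the number of enclosed faces: if $C$ encloses more than one face, split $C$ across an interior edge of $B$ into two strictly smaller cycles and recurse), $C$ equals the signed sum of the bounded faces it encloses, as an element of $Z_1(B)\subseteq Z_1(S)$. Applying $\sigma_Q$ and using linearity gives $\sigma_Q(C)=\sum_F \pm\sigma_Q(\partial F)=\vec 0$, since $\sigma_Q$ vanishes on every elementary cycle by hypothesis. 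Iterating over all couplings $Q$ in the partition yields the lemma.

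The main obstacle is not the combinatorics of collisions but the bookkeeping of signs: one must verify that $\sigma_Q$ really is a $\mathbb Z$-linear function on $Z_1(S)$ with integer (not $\mathbb F_2$) coefficients, and that the signs in the ``$C$ equals the signed sum of enclosed faces'' identity match those in the definition of $\sigma_Q(C)$. Once orientations are fixed consistently (e.g., each bounded face oriented counterclockwise and $C$ oriented so that its enclosed faces lie to its left), this bookkeeping is routine and the argument collapses to a single invocation of linearity.
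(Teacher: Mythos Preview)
Your proposal is correct and follows essentially the same route as the paper: both reduce to biconnected components and then argue by induction on the number of bounded faces enclosed by a cycle, splitting a non-elementary cycle across an interior path into two smaller cycles whose consistency forces consistency of the big one. The only difference is packaging: the paper carries the displacement vectors explicitly through the induction, whereas you factor the argument as ``$\sigma_Q$ is a $\mathbb{Z}$-linear functional on $Z_1(S)$'' plus ``bounded-face boundaries generate $Z_1$ in a $2$-connected plane graph,'' which is a clean modular restatement of the same induction.
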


\begin{proof}
	We need to prove only that non-collision of elementary cycles implies non-collision of any cycle.
	
	\begin{figure}[tb]
		\centering\includegraphics[width=6cm]{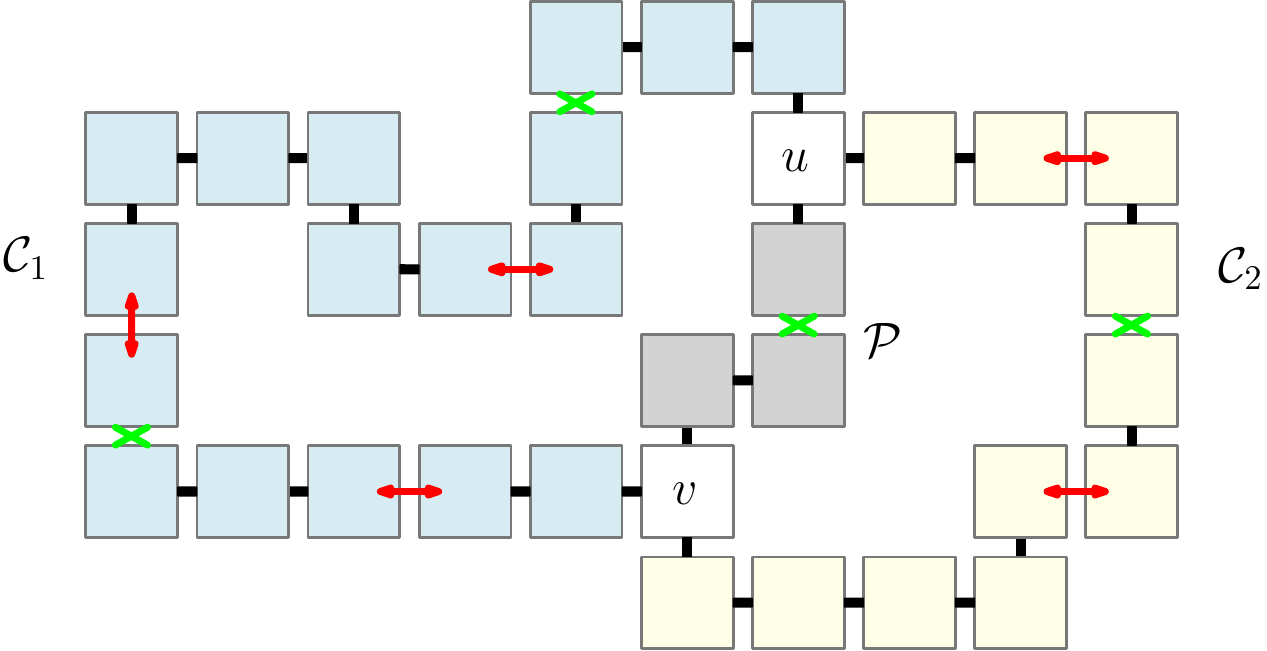}
		\caption{A simple cycle ${\cal C}$ split into ${\cal C}_1$ and ${\cal C}_2$ at two nodes $u$ and $v$, and an interior path ${\cal P}$ (of three nodes) connecting $u$ and $v$.}
		\label{fig:two-cycles}
	\end{figure}
	
	It is easy to see that in any simple cycle, a horizontal operation must be in a coupling with another horizontal operation in that cycle. By symmetry the same holds for vertical operations.
	It is also easy to see that operations incident to a cut node cannot cause a cycle collision, only a node collision.
	The latter observation leads to the fact that we can restrict ourselves to biconnected components.
	
	Let ${\cal C}$ be any cycle in a biconnected component, and assume that all elementary cycles of that biconnected component cannot cause cycle collisions. We will show that ${\cal C}$ cannot have a cycle collision either, by induction on the number $f$ of faces inside ${\cal C}$.
	
	If $f=1$, then ${\cal C}$ is elementary so the claim is true by assumption. Let $f>1$ and assume that all cycles with $\leq f-1$ faces inside cannot cause a cycle collision.
	Take a simple path ${\cal P}$ in the biconnected component and inside ${\cal C}$; see Figure~\ref{fig:two-cycles}. Let $u,v$ be the two nodes on ${\cal C}$ that are adjacent to the first and last nodes of ${\cal P}$. We consider the three paths between $u$ and $v$: two on ${\cal C}$, and ${\cal P}$ itself. We let $u$ and $v$ be the start and end of each of these paths, so that all operations on ${\cal C}$, on ${\cal P}$, and between ${\cal P}$ and $u$ and $v$ occur on exactly one of these paths. Call these three paths ${\cal C}_1$, ${\cal C}_2$, and ${\cal P}'$.
	
	By induction we know that the cycles formed by ${\cal C}_1$ and ${\cal P}$ and by  ${\cal C}_2$ and ${\cal P}$ cannot cause cycle collisions, since they have fewer than $f$ faces inside.
	Treat $u$ as an anchor, and consider how the operations on ${\cal C}_1$ move $v$. This is specified by a vector $\langle i,j\rangle$. Since the cycle formed by ${\cal C}_1$ and ${\cal P}$ cannot cause collisions, the path ${\cal P}'$ must also move $v$ over the same vector $\langle i,j\rangle$. Repeating this argument on ${\cal C}_2$ and ${\cal P}$, we see that ${\cal C}_2$ must also move $v$ by $\langle i,j\rangle$.
	Hence, the cycle ${\cal C}$ cannot cause a cycle collision either. The argument holds for any couplings that make sure that the smaller cycles do not cause cycle collisions.
\end{proof}

We now detect possible collisions as follows: First, we test if a cycle collision can be made in any elementary cycle. This is easy to do in linear time overall. If we found a collision, we are done. If not, we must check for node collisions. In Section~\ref{sec:model} we observed:

\begin{observation}
For a graph $S$ with cycles which cannot cause cycle collisions, the occupied cells after any subset of the operations is the same as for any spanning tree of $S$.
\end{observation}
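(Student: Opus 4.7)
The plan is to fix an arbitrary spanning tree $T$ of $S$ rooted at the anchor $u_0$ and show that, at every moment during the schedule, the positions of the nodes (and hence the occupied cells) computed via $S$ coincide with those computed via $T$. For any node $v$, its position at time $t$ is obtained by summing the induced motion vectors along a $u_0$-to-$v$ path, where each operation contributes a vector proportional to its completion fraction at time $t$. In $T$ the path is unique, so the displacement is well-defined. In $S$ there may be several $u_0$-to-$v$ paths, and I need any two of them to yield the same total.

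At time $t = 1$ (all operations completed), this path-independence is exactly the cycle-collision-freeness hypothesis, lifted from elementary to arbitrary cycles by \cref{lem:cycles}. For intermediate times, I would exploit that operations within a single coupling execute in lock-step at constant speed over a common unit interval. Consequently, the partial contribution of a coupling along any path at time $t$ is its elapsed-time fraction multiplied by the sum of the unit motion vectors of its operations on that path. Path-independence at every $t$ then forces each individual coupling $Q_i$ to balance the sums along the two paths of every cycle in $S$: otherwise the cycle would be imbalanced at every instant during $Q_i$'s execution, manifesting as a cycle collision. Hence every ``snapshot'' of partially and fully completed couplings preserves cycle balance, the displacement along any $u_0$-to-$v$ path in $S$ matches the one along the tree path in $T$, and the occupied cells of $S$ coincide with those of $T$ throughout the schedule.

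The main obstacle I expect is the per-coupling invariance step, which I would make precise by contradiction. If some coupling $Q_i$ produced unequal path sums along the two paths $p_1, p_2$ of some cycle, then by linearity in elapsed time the discrepancy is present at every instant during $Q_i$'s execution; composing this with the (consistent) contribution of all previously completed couplings yields a nonzero imbalance at some instant, and hence a cycle collision, contradicting the hypothesis on $Q$. With this per-coupling claim in hand, the rest is a straightforward path-sum calculation combined with \cref{lem:cycles}, yielding the observation.
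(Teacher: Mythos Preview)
Your argument is sound, and the core mechanism---that absence of cycle collisions forces each coupling to contribute the same displacement vector along every $u_0$-to-$v$ path, so by linearity any subset of couplings yields path-independent displacements---is exactly what underlies the paper's treatment. The paper, however, does not prove this observation at all: it simply refers back to the Model section, where the cycle-collision notion is \emph{defined} as the failure of $\vec v_{p_1}=\vec v_{p_2}$ for a set of parallel operations. Given that definition, the observation is essentially immediate (each coupling balances every cycle, hence so does any union of couplings), and the paper records it without further argument.

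Two places where you do more work than needed. First, the statement concerns positions \emph{after} a subset of operations has completed, not at arbitrary intermediate instants; your continuous-time analysis with elapsed-time fractions is correct but superfluous for what is being claimed. Second, the hypothesis already speaks of all cycles, so invoking \cref{lem:cycles} to lift from elementary cycles is not required here (that lemma is used earlier, to justify checking only elementary cycles in the algorithm). Neither point is an error; your proof simply establishes a bit more than the observation asks for.
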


Hence, we compute any spanning tree $T$ of $S$ and use the algorithm of Section~\ref{subsection:collisions-trees} on $T$ to find node collisions. 
Pairs of nodes between which an edge in $S$ was omitted in $T$ cannot cause node collisions. This extends the results of Section~\ref{subsection:collisions-trees} from trees to general connected graphs with the same time bounds.

\section{Continuous and Discrete Collision-free Schedule}
\label{sec:avoiding}

So far we considered detecting whether collisions might occur for an input instance. In this section, we consider the problem of deciding if all operations can be performed without any collisions, for a suitable choice of operation order or starting times.
We show that, even if there are only expansions that are horizontal and couplings have size $O(1)$, in the discrete-time model the problem is NP-complete. Interestingly, the same problem is solvable in polynomial time in the continuous-time model. When we add vertical expansions, the problem is NP-hard in the continuous-time model.

\begin{theorem}
\label{t:nphard1}
\textsc{Discrete Collision-free Schedule} is NP-complete even if all operations are horizontal expansions and all couplings have size $O(1)$.
\end{theorem}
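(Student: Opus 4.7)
The plan is to first show membership in NP: a schedule in the discrete model is a permutation $\sigma$ of the couplings $Q_1,\ldots,Q_k$, and given $\sigma$ one can simulate the operations in order and, after each coupling, check in polynomial time whether two nodes occupy the same cell or two node trajectories crossed during the unit time of motion. Since $k\le n$, this verification is polynomial, so \textsc{Discrete Collision-free Schedule} lies in NP.

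For hardness I would reduce from \textsc{3-SAT}. Given a formula $\phi$ with variables $x_1,\ldots,x_n$ and clauses $C_1,\ldots,C_m$, I would build a shape $S$ consisting of horizontal tracks of nodes connected by thin vertical spines of nodes, in such a way that all operations are horizontal expansions and every coupling has size at most some small constant. Each variable $x_i$ is represented by a gadget whose handful of couplings admits exactly two collision-free linear orderings, corresponding to $x_i=\true$ and $x_i=\false$ and distinguished by the final horizontal position of a dedicated ``literal pointer'' node on the track of $x_i$. Each clause $C_j=(\ell_{j,1}\vee\ell_{j,2}\vee\ell_{j,3})$ is represented by a gadget with a few constant-size couplings that can be scheduled without a node collision only if at least one of the three literal pointers of $C_j$ has already been shifted into its satisfying cell by the moment the clause gadget is executed; this is enforced by small obstacle subtrees placed so that any execution in which all three literals are in their falsifying positions is forced to move some node onto an already occupied cell. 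Additional obstacles along the connecting spines guarantee that each clause gadget can only be scheduled after the variable gadgets on which it depends have already committed to one of their two orderings.

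The correctness of the reduction has two directions. Any satisfying assignment of $\phi$ yields a collision-free schedule by executing each variable gadget in the order that matches its truth value, then executing the clause gadgets in any legal order, since by construction each clause gadget now has a literal pointer in its satisfying cell. Conversely, any collision-free schedule of $S$ induces a truth assignment by reading off the chosen order inside each variable gadget, and the fact that no clause gadget suffered a node collision means each clause has at least one literal pointer in its satisfying cell, i.e., the induced assignment satisfies every clause of $\phi$. All gadgets have constant size, so the reduction is polynomial.

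The hard part will be engineering the gadgets under the combined restriction that only horizontal expansions are allowed and every coupling has size $O(1)$. Horizontal-only motion forces all inter-row influence to pass through nodes shared across narrow vertical spines, while the constant-size coupling restriction forbids any single operation from broadcasting a variable's setting to many clauses at once. In particular, proving that each variable gadget admits exactly two collision-free orderings, and that each clause gadget faithfully reads the chosen literal positions without admitting unintended collision-free schedules (for example, by some clever interleaving of spine and clause couplings), requires a careful arrangement of small obstacle subtrees and constant-size couplings. This delicate gadget design is the technical core of the argument.
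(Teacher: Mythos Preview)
Your outline has the right shape, but there is a concrete obstruction that makes plain \textsc{3-SAT} the wrong source problem under the constraint that every coupling has size $O(1)$. In \textsc{3-SAT} a single variable $x_i$ may occur in $\Theta(m)$ clauses. Whatever mechanism you use to transmit the chosen truth value of $x_i$ to a clause gadget must ultimately tie a coupling in the variable gadget to something in that clause gadget; if a single operation in the variable gadget is coupled with one operation per clause containing $x_i$, that coupling has unbounded size. Your sketch never explains how a ``literal pointer'' position inside one track can be sensed by arbitrarily many clause gadgets using only horizontal expansions and constant-size couplings, and this is exactly the step that does not go through. The paper avoids the issue by reducing from \textsc{LSAT}, in which each literal occurs at most twice, so every coupling has size at most three.

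A second gap is the phase separation between ``commit all variables'' and ``evaluate all clauses''. You appeal to ``additional obstacles along the connecting spines'', but with only horizontal expansions and $O(1)$ couplings it is not at all obvious how to force \emph{every} variable to be set before \emph{any} clause gadget fires; nothing in your construction prevents a schedule that interleaves variable commitments with clause executions, or that postpones the committing step of some $x_i$ until after a clause containing $x_i$ has already been processed. The paper handles this with a separate order gadget (enforcing $a\prec b$) and an explicit chain $y_1\prec\cdots\prec y_n\prec z_1\prec\cdots\prec z_n\prec u_1\prec\cdots\prec u_m$, together with dedicated release operations $z_i,u_h$ so that \emph{all} operations can eventually be performed once a satisfying assignment has been encoded. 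Without an analogous device, your reduction does not yet establish either direction of the correctness argument.
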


\begin{proof}
In order to prove NP-hardness, we reduce the following problem to \textsc{Discrete Collision-free Schedule}.

\medskip

\noindent\textbf{{\sc LSAT (Linear SAT)}.} Given a CNF formula where each clause contains exactly three literals, each clause has common literals with at most one other clause, and two clauses have at most one common literal, decide if there is a satisfying assignment for the formula.

\medskip

\textsc{LSAT} is NP-complete \cite{DBLP:journals/dam/ArkinBCCKMS18}.
Let $\phi$ be a formula of \textsc{LSAT} with $n$ variables $\{ x_1,\dots,x_n \}$ and $m$ clauses.
We construct an instance of \textsc{Discrete Collision-free Schedule} for formula $\phi$ such that there is a collision-free schedule for the instance if and only if $\phi$ is satisfiable.

For each variable $x_i$, we utilize the \emph{variable gadget} shown in Figure~\ref{fig:gadgets}(a).
We perform operation $x_i$ or $\bar x_i$ to set variable $x_i$ to true or false, respectively.
The gadget causes a collision if we try to set variable $x_i$ both to true and false.
We use operation $y_i$ to enforce that the variable is set, and later use operation $z_i$ to make sure that the other operation, $\bar x_i$ or $x_i$, can also be performed.

\begin{figure}[tb]
\centering
\includegraphics[scale=0.95]{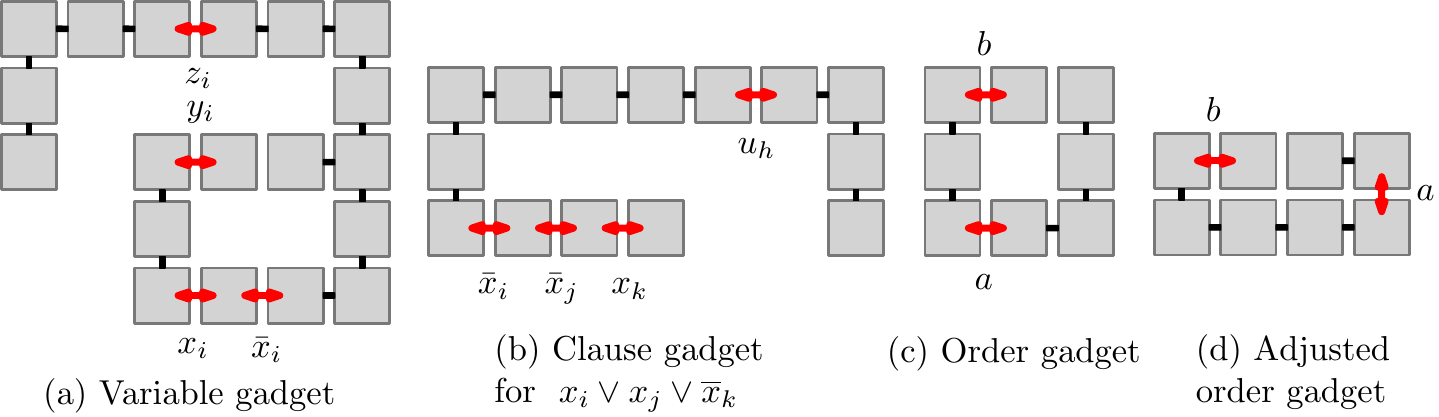}
\caption{Gadgets for the NP-hardness proof of {\sc Collision-free Schedule}.
	The gadget in (c) is used for the discrete case; the continuous case requires the gadget shown in (d).}
\label{fig:gadgets}
\end{figure}

For each clause, we utilize the \emph{clause gadget} shown in Figure~\ref{fig:gadgets}(b).
The gadget contains an operation for each literal that we couple with its inverse in the respective variable gadget, i.e., the operation is performed if and only if the literal is set to false.
The gadget causes a collision if and only if all three operations are performed, i.e., all three literals are set to false; here we assume that operation $u_h$ cannot be performed yet.
Hence, a collision implies that the clause is not satisfied.
We find a satisfying assignment for $\phi$ if and only if there are no collisions after setting all variables.

Two more ingredients are needed: First, we must make sure that all operations are eventually possible if $\phi$ can be satisfied, because \textsc{Discrete Collision-free Schedule} requires that we perform all operations. For that, we have included a release operation in each variable and clause gadget, namely $z_i$ resp.\ $u_h$. By expanding those, we obtain enough space to expand all $x_i$s and $\bar x_i$s.

Second, we must ensure that the release operations are not possible before all variables are set.
For that, we introduce the \emph{order gadget} shown in Figure~\ref{fig:gadgets}(c).
The gadget causes a collision if and only if we perform operation $b$ before $a$.
Hence, it forces an order between $a$ and $b$.
Let $a \prec b$ denote that $a$ has to be performed before $b$.
We define the following order between the operations:
\[y_1 \prec y_2 \prec \dots \prec y_n \prec z_1 \prec z_2 \prec \dots \prec z_n \prec u_1 \prec u_2 \prec \dots u_m.\]
We need a single order gadget for each relation in the chain, $2n+m-1$ in total.
At the time we perform $y_n$, each variable must have been set to either true or false. This is the moment when a truth assignment of $\phi$ is tested.
Note that in the discrete-time model, each coupled set of operations must have completed before the next one starts, so we cannot start with any $z$ or $u$ before all $y$ are completed, which requires that each variable has been set. 

We have only utilized horizontal expansions in our gadgets.
Furthermore, in \textsc{LSAT}, each literal occurs at most twice.
Hence, each operation $x_i$ and $\bar x_i$ occurs once in a variable gadget, and at most twice in clause gadgets.
Each other operation occurs once in a variable or clause gadget, and at most twice in order gadgets.
Thus, each coupling is of constant size. We can connect all gadgets into a single tree structure of linear size to produce a connected shape~$S$.

Finally, note that \textsc{Discrete Collision-free Schedule} is in NP since we can trivially check any schedule for collisions in polynomial time.
\end{proof}

We switch to the continuous-time model and give a positive and a negative result that complement the NP-completeness result in the discrete-time model. 

\begin{theorem}
\label{th:cccs:polynomial}
\textsc{Collision-free Schedule} is solvable in linear time if all operations are horizontal.
\end{theorem}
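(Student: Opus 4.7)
The plan is to show that the \emph{simultaneous schedule}---in which every coupled group has starting time $t_0 = 0$ and finishing time $t = 1$---is collision-free whenever any schedule is, and then to check its feasibility in linear time. Since every operation is horizontal, no node ever changes its $y$-coordinate, so two nodes can possibly collide only if they share a row.

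The first step is to rule out cycle collisions exactly as in Section~\ref{section:general-shapes}: for every elementary cycle of $S$, check that the sum of horizontal contributions of its operations around the cycle vanishes. By Lemma~\ref{lem:cycles} this handles all cycles, and with a biconnected-components decomposition the check runs in $O(n)$ total time. If it fails, the instance is infeasible; otherwise every node $w$ has a well-defined total horizontal displacement $f(w)\in\mathbb{Z}$, the signed sum of $\pm 1$ contributions along any anchor-to-$w$ path. All values $f(w)$ are precomputed by a single tree traversal in linear time.

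The second step is to analyze the simultaneous schedule geometrically. A persistent node $w$ occupies the rectangle $[w_x+tf(w),\,w_x+tf(w)+1]\times[w_y,w_y+1]$ at time $t$; a node created by an expansion occupies a rectangle whose two horizontal edges are affine in $t$ (anchored to its parent on one side while widening from $0$ to $1$); and a node killed by a contraction occupies a rectangle of shrinking width, likewise with affine horizontal edges. For any two same-row rectangles, the relation ``first entirely left of second'' is an affine inequality in $t$, and it holds throughout $[0,1]$ whenever it holds at $t=0$ and $t=1$. Since two same-row rectangles cannot switch from one disjointness relation to the other without passing through an overlapping configuration in between, a short case analysis over persistent/expanding/contracting pairs reduces feasibility of the simultaneous schedule to two conditions: (a) the final configuration of full-square nodes is self-disjoint, and (b) for every pair of nodes that are both present at $t=0$ and $t=1$ and that share a row, their left-to-right order is preserved.

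Both conditions are necessary for \emph{any} schedule: two unit squares in the same row cannot swap order along a continuous horizontal motion without overlapping, and an invalid final configuration is unreachable collision-free. Hence a collision-free schedule exists iff the cycle-compatibility, final-disjointness, and per-row order-preservation checks all succeed, and I would implement the latter two in linear time by computing the final cells from the $f(\cdot)$ values and bucketing nodes by $y$-coordinate using counting sort on the $O(n)$-wide integer coordinates, then scanning each bucket to verify that the initial and final $x$-orders agree and that no two final cells coincide. The aspect I expect to need the most care is the bookkeeping around the transient rectangles of expansions and contractions, because they are neither unit squares nor static; but since their horizontal edges are affine in $t$, the argument closes uniformly and the final-configuration check absorbs them.
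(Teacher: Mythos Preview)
Your proposal is correct and follows the same core idea as the paper: test only the simultaneous schedule, justified by the fact that every node's horizontal extent is affine in $t$ so disjointness at $t=0$ and $t=1$ implies disjointness throughout. The paper phrases the final check slightly differently---it tracks each horizontal \emph{gap} between row-adjacent nodes and verifies that its final size is nonnegative (a set of linear constraints whose feasible region is convex), which is equivalent to your conditions (a) and (b) restricted to row-adjacent pairs---and it does not spell out the cycle-collision preprocessing you include, but the substance is the same.
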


\begin{proof}
We will show that there is a collision-free schedule if and only if the execution of all operations at once is collision-free.
Notice that couplings are irrelevant: all operations will be performed simultaneously in any case, which is allowed regardless of the presence of couplings in the continuous-time model. Each node moves with a fixed speed from its initial position to its destination.

Observe that if all operations are horizontal, no node can leave its row. Hence, we can set up constraints for avoiding collisions for every row independently. Consider any \emph{gap} in any row of a shape $S$: a bounded maximal subrow of empty grid cells. Operations may shrink or grow a gap by one unit, other operations will not affect it. The non-collision constraint for a gap states that the number of operations that grow the gap plus the initial size of the gap is at least as large as the number of operations that shrink the gap. There is no collision if and only if no gap gets negative size.

We can determine the pairs of nodes that enclose each gap in linear time by traversing along the outside of the (tree) shape and maintaining a single stack. We can determine where each node ends up in linear time overall as well, giving us the constraints.
\end{proof}

The proof can be seen as a study of the feasible region in a $k$-dimensional space when there are $k$ operations. The gap constraints are linear constraints. If the starting shape has no collisions and the final shape has no collisions, then these shapes are both in the feasible region, and since the feasible region is convex, the connecting segment between them is in the feasible region as well. This connecting segment in $k$-dimensional space corresponds to executing all operations in parallel.

\begin{theorem}
\label{t:nphard2}
\textsc{Collision-free Schedule} is NP-hard.
\end{theorem}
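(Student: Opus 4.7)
The plan is to reduce from \textsc{LSAT} in the same spirit as Theorem~\ref{t:nphard1}. The variable and clause gadgets of Figure~\ref{fig:gadgets}(a,b) can be reused essentially verbatim: whether they collide depends only on which subsets of their operations are performed, not on the precise starting times, so their behavior is identical in the continuous- and discrete-time models. In particular, the variable gadget still collides iff both $x_i$ and $\bar x_i$ are performed, and the clause gadget still collides iff the three literal operations are performed without the release $u_h$. The only piece that must be redesigned is the order gadget, because in the continuous-time model two operations may overlap arbitrarily in time, so the discrete gadget of Figure~\ref{fig:gadgets}(c) no longer enforces an order.

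The key ingredient is the ``adjusted order gadget'' of Figure~\ref{fig:gadgets}(d), which combines horizontal and vertical expansions so that the set of collision-free starting-time pairs $(t_0(a), t_0(b))$ is not convex: it forces $a$ to (essentially) complete before $b$ starts. This is exactly the kind of geometric obstruction ruled out by Theorem~\ref{th:cccs:polynomial} for purely horizontal inputs, which is why allowing a vertical expansion is indispensable to crossing the P/NP boundary. The intended mechanism is a short vertical arm that, during its unit-time expansion, sweeps through a cell that the horizontal expansion for $b$ also needs to traverse; unless $a$ has first emptied the relevant cell, the two nodes overlap at some moment of the overlap interval and produce a node collision.

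With such an order gadget in hand, chaining $2n+m-1$ copies along
\[
y_1 \prec \cdots \prec y_n \prec z_1 \prec \cdots \prec z_n \prec u_1 \prec \cdots \prec u_m
\]
forces every variable to be set (by firing $x_i$ or $\bar x_i$) before any of $y_n, z_j, u_h$ may begin, which is precisely the moment when the clause gadgets test whether the current truth assignment satisfies $\phi$. The release operations $z_i$ and $u_h$ then clear enough space so that, if $\phi$ is satisfiable, every remaining $x_i$ or $\bar x_i$ can also be fired without collisions. All gadgets are glued into a single tree of linear size, every coupling has constant size, and only horizontal and vertical expansions are used.

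The main obstacle is proving correctness of the continuous order gadget. One must verify both (i)~that whenever $t_0(b) < t_0(a) + \delta$ for a suitable constant $\delta$ the trajectories of two specified nodes of the gadget intersect at some time in $[\min(t_0(a),t_0(b)),\, \max(t_0(a),t_0(b))+1]$, and (ii)~that whenever $b$ is delayed sufficiently the gadget admits a collision-free execution. Because each operation has constant execution speed and unit duration, this amounts to a case analysis of the piecewise-linear positions of a constant number of nodes over one unit of overlap, and is routine but case-heavy. Once this is established, the reduction is polynomial and $\phi \in \textsc{LSAT}$ iff the constructed instance admits a collision-free schedule, completing the proof.
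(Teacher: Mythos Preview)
Your proposal is correct and follows essentially the same approach as the paper: reuse the \textsc{LSAT} reduction of Theorem~\ref{t:nphard1} verbatim, replacing only the order gadget by the adjusted gadget of Figure~\ref{fig:gadgets}(d), which uses a horizontal and a vertical expansion so that it enforces temporal separation (not just precedence) between $a$ and $b$. The paper's proof is in fact even terser than yours, so your added justification that the all-horizontal variable and clause gadgets behave identically in the continuous model (via the gap argument behind Theorem~\ref{th:cccs:polynomial}) and your sketch of what must be checked for the new order gadget are welcome elaborations rather than deviations.
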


\begin{proof}
In order to prove NP-hardness for \textsc{Collision-free Schedule}, we adjust the order gadget such that it does not only force an order between operations but also separates them in time (see Figure~\ref{fig:gadgets}(d)).
For that, it uses a horizontal and a vertical expansion.
\end{proof}

\section{Conclusions and Open Problems}

We have studied algorithmic questions associated with modular reconfigurable robot models which can be summarized as ``Can collisions be caused?'' and ``Can collisions be avoided?'', while performing all operations in the specification. All versions of causing collisions are polynomial-time solvable, but for avoiding collisions, this is not true. 

The most intriguing open problem is whether the general version of avoiding collisions without coupling is NP-hard, or whether it can be solved in polynomial time. It would also be interesting to know if causing collisions can be solved in subquadratic time with couplings.

\clearpage

\bibliography{main}

\begin{thebibliography}{10}

\bibitem{almalki2022geometric}
Nada Almalki and Othon Michail.
\newblock On geometric shape construction via growth operations.
\newblock In {\em Algorithmics of Wireless Networks: 18th International
  Symposium on Algorithmics of Wireless Networks, ALGOSENSORS 2022, Potsdam,
  Germany, September 8--9, 2022, Proceedings}, pages 1--17. Springer, 2022.

\bibitem{almethen2020pushing}
Abdullah Almethen, Othon Michail, and Igor Potapov.
\newblock Pushing lines helps: Efficient universal centralised transformations
  for programmable matter.
\newblock {\em Theoretical Computer Science}, 830:43--59, 2020.

\bibitem{aloupis2009realistic}
Greg Aloupis, S{\'e}bastien Collette, Mirela Damian, Erik~D. Demaine, Dania
  El-Khechen, Robin Flatland, Stefan Langerman, Joseph O’Rourke, Val Pinciu,
  Suneeta Ramaswami, et~al.
\newblock Realistic reconfiguration of crystalline (and telecube) robots.
\newblock {\em Algorithms Foundation of Robotics VIII}, pages 433--447, 2009.

\bibitem{aloupis2009linear}
Greg Aloupis, S{\'e}bastien Collette, Mirela Damian, Erik~D. Demaine, Robin
  Flatland, Stefan Langerman, Joseph O'Rourke, Suneeta Ramaswami, Vera
  Sacrist{\'a}n, and Stefanie Wuhrer.
\newblock Linear reconfiguration of cube-style modular robots.
\newblock {\em Computational Geometry}, 42(6-7):652--663, 2009.

\bibitem{aloupis2008reconfiguration}
Greg Aloupis, S{\'e}bastien Collette, Erik~D. Demaine, Stefan Langerman, Vera
  Sacrist{\'a}n, and Stefanie Wuhrer.
\newblock Reconfiguration of cube-style modular robots using {$O(\log n)$}
  parallel moves.
\newblock In {\em Algorithms and Computation: 19th International Symposium,
  ISAAC 2008, Gold Coast, Australia, December 15-17, 2008. Proceedings 19},
  pages 342--353. Springer, 2008.

\bibitem{DBLP:journals/dam/ArkinBCCKMS18}
Esther~M. Arkin, Aritra Banik, Paz Carmi, Gui Citovsky, Matthew~J. Katz, Joseph
  S.~B. Mitchell, and Marina Simakov.
\newblock Selecting and covering colored points.
\newblock {\em Discret. Appl. Math.}, 250:75--86, 2018.

\bibitem{daymude2019computing}
Joshua~J. Daymude, Kristian Hinnenthal, Andr{\'e}a~W. Richa, and Christian
  Scheideler.
\newblock Computing by programmable particles.
\newblock {\em Distributed computing by mobile entities: current research in
  moving and computing}, pages 615--681, 2019.

\bibitem{daymude2021canonical}
Joshua~J. Daymude, Andr{\'e}a~W. Richa, and Christian Scheideler.
\newblock The canonical amoebot model: Algorithms and concurrency control.
\newblock In {\em 35th International Symposium on Distributed Computing (DISC
  2021)}. Schloss Dagstuhl-Leibniz-Zentrum f{\"u}r Informatik, 2021.

\bibitem{bcko-cgaa-08}
Mark de~Berg, Otfried Cheong, Marc~J. van Kreveld, and Mark~H. Overmars.
\newblock {\em Computational geometry: algorithms and applications, 3rd
  Edition}.
\newblock Springer, 2008.
\newblock URL: \url{https://www.worldcat.org/oclc/227584184}.

\bibitem{demaine2019coordinated}
Erik~D. Demaine, S{\'a}ndor~P. Fekete, Phillip Keldenich, Henk Meijer, and
  Christian Scheffer.
\newblock Coordinated motion planning: Reconfiguring a swarm of labeled robots
  with bounded stretch.
\newblock {\em SIAM Journal on Computing}, 48(6):1727--1762, 2019.

\bibitem{demaine2017origamizer}
Erik~D. Demaine and Tomohiro Tachi.
\newblock Origamizer: A practical algorithm for folding any polyhedron.
\newblock In {\em 33rd International Symposium on Computational Geometry (SoCG
  2017)}. Schloss Dagstuhl-Leibniz-Zentrum fuer Informatik, 2017.

\bibitem{derakhshandeh2014amoebot}
Zahra Derakhshandeh, Shlomi Dolev, Robert Gmyr, Andr{\'e}a~W. Richa, Christian
  Scheideler, and Thim Strothmann.
\newblock Amoebot-a new model for programmable matter.
\newblock In {\em Proceedings of the 26th ACM Symposium on Parallelism in
  Algorithms and Architectures}, pages 220--222, 2014.

\bibitem{ezra2022ray}
Esther Ezra and Micha Sharir.
\newblock On ray shooting for triangles in 3-space and related problems.
\newblock {\em SIAM Journal on Computing}, 51(4):1065--1095, 2022.

\bibitem{DBLP:journals/jcb/FeldmannPSD22}
Michael Feldmann, Andreas Padalkin, Christian Scheideler, and Shlomi Dolev.
\newblock Coordinating amoebots via reconfigurable circuits.
\newblock {\em J. Comput. Biol.}, 29(4):317--343, 2022.

\bibitem{hawkes2010programmable}
Elliot Hawkes, B.~An, Nadia~M. Benbernou, H.~Tanaka, Sangbae Kim, Erik~D.
  Demaine, D.~Rus, and Robert~J. Wood.
\newblock Programmable matter by folding.
\newblock {\em Proceedings of the National Academy of Sciences},
  107(28):12441--12445, 2010.

\bibitem{mertzios2022complexity}
George~B. Mertzios, Othon Michail, George Skretas, Paul~G. Spirakis, and
  Michail Theofilatos.
\newblock The complexity of growing a graph.
\newblock In {\em Algorithmics of Wireless Networks: 18th International
  Symposium on Algorithmics of Wireless Networks, ALGOSENSORS 2022, Potsdam,
  Germany, September 8--9, 2022, Proceedings}, pages 123--137. Springer, 2022.

\bibitem{rus2001crystalline}
Daniela Rus and Marsette Vona.
\newblock Crystalline robots: Self-reconfiguration with compressible unit
  modules.
\newblock {\em Autonomous Robots}, 10:107--124, 2001.

\bibitem{schwartz1983piano}
Jacob~T. Schwartz and Micha Sharir.
\newblock On the piano movers' problem: Iii. coordinating the motion of several
  independent bodies: The special case of circular bodies moving amidst
  polygonal barriers.
\newblock {\em The International Journal of Robotics Research}, 2(3):46--75,
  1983.

\bibitem{woods2013active}
Damien Woods, Ho-Lin Chen, Scott Goodfriend, Nadine Dabby, Erik Winfree, and
  Peng Yin.
\newblock Active self-assembly of algorithmic shapes and patterns in
  polylogarithmic time.
\newblock In {\em Proceedings of the 4th conference on Innovations in
  Theoretical Computer Science}, pages 353--354, 2013.

\end{thebibliography}
\end{document}